\newtheorem{theorem}{Theorem}[section]
\newtheorem{lemma}[theorem]{Lemma}
\newcommand{\qed}{\nobreak \ifvmode \relax \else
     \ifdim\lastskip<1.5em \hskip-\lastskip
      \hskip1.5em plus0em minus0.5em \fi \nobreak
      \vrule height0.75em width0.5em depth0.25em\fi}
\begin{document}
\title{Generalized Degrees of Freedom of the Symmetric Gaussian $K$ User Interference Channel}
\author{
\authorblockN{Syed A. Jafar\\}
\authorblockA{Electrical Engineering and Computer Science\\
University of California Irvine, \\
Irvine, California, 92697, USA\\
Email: syed@uci.edu\\ \vspace{1cm}}
\and
\authorblockN{Sriram Vishwanath\\}
\authorblockA{Department of Electrical Engineering \\
University of Texas Austin\\
Austin, Texas, USA\\
Email: sriram@ece.utexas.edu\\ \vspace{-1cm}
}}

\maketitle
\IEEEpeerreviewmaketitle

\begin{abstract}
We characterize the generalized degrees of freedom of the $K$ user symmetric Gaussian interference channel where all desired links have the same signal-to-noise ratio (SNR) and all undesired links carrying interference have the same interference-to-noise ratio, $\mbox{INR}=\mbox{SNR}^\alpha$. We find that the number of generalized degrees of freedom per user, $d(\alpha)$, does not depend on the number of users, so that the characterization is identical to the $2$ user interference channel with the exception of a singularity at $\alpha=1$ where $d(1)=\frac{1}{K}$. The achievable schemes use multilevel coding with a nested lattice structure that opens the possibility that the sum of interfering signals can be decoded at a receiver even though the messages carried by the interfering signals are not decodable.
\end{abstract}
\newpage
\section{Introduction}
The capacity of the two user Gaussian interference channel has recently been characterized upto an accuracy of one bit per channel use \cite{Etkin_Tse_Wang}. A closely related notion, also introduced in \cite{Etkin_Tse_Wang}, is that of the generalized degrees of freedom. The generalized degrees of freedom provide a clear mapping of various interference management approaches to the SNR and INR regimes where they are optimal. For designing wireless networks, the natural question is how the new insights provided in \cite{Etkin_Tse_Wang} generalize to communication scenarios with more than $2$ users. Unfortunately, for interference networks with more than $2$ users, even the conventional degrees of freedom are not known in many cases, e.g. when channel coefficients take arbitrary but constant values. Degrees of freedom are known when channel coefficients are time-varying or frequency selective and in some cases if the nodes are equipped with multiple antennas \cite{Cadambe_Jafar_int, Cadambe_Jafar_X}. The key insight from the degree of freedom characterizations is the possibility of interference alignment in wireless networks by coding across the dimension (time, frequency, space) along which the channel coefficients vary. For channels with single antenna nodes and constant coefficients (i.e., channels that do not vary in time, frequency or space dimensions), deterministic chanel models proposed in \cite{Avestimehr_Diggavi_Tse} identify a new dimension - the signal \emph{level} along which the relativity of alignment can be exploited to achieve interference alignment \cite{Bresler_Parekh_Tse, Cadambe_Jafar_Shamai}. The use of deterministic channel models has led to approximate capacity characterizations for the many-to-one and one-to-many Gaussian interference channels \cite{Bresler_Parekh_Tse}. Deterministic channel models are also used in \cite{Cadambe_Jafar_Shamai} to characterize the conventional degrees of freedom for some special instances of Gaussian interference channels with constant channel coefficients. However, in the richer context of generalized degrees of freedom, there are no known characterizations for more than $2$ users. Thus, it is not clear if interference regimes elegantly identified by \cite{Etkin_Tse_Wang} generalize to interference channels with more than $2$ users. In this work, our goal is to advance the generalized degrees of freedom perspective to $K$ user \emph{symmetric} Gaussian interference channels with constant channel coefficients. We start with the channel model.

\section{The Symmetric Gaussian Interference Channel}\label{sec:systemmodel}
We consider the $K$ user interference network described by the input-output equations:
\begin{eqnarray*}
Y^{[k]}(t)=X^{[k]}(t)+\sqrt{\frac{\mbox{INR}}{\mbox{SNR}}}\sum_{j=1,j\neq k}^KX^{[j]}(t)+Z^{[k]}(t)\label{eq:inout}
\end{eqnarray*}
where at discrete time index $t$, $Y^{[k]}(t)$ and $Z^{[k]}(t)$ are the channel output and additive white Gaussian noise (AWGN), respectively, at the $k^{th}$ receiver and $X^{[j]}(t)$ is the channel input symbol at the $j^{th}$ transmitter, $\forall j,k\in\mathcal{K}\triangleq\{1,2,\cdots,K\}$. All symbols are real and the channel coefficients are fixed. The time index $t$ is suppressed henceforth for compact notation.

The AWGN is normalized to have zero mean and unit variance and the input power constraint is given by
\begin{eqnarray*}
\mbox{E}\left[(X^{[k]})^2\right]\leq \mbox{SNR}, ~~~\forall k\in\mathcal{K}.\label{eq:powerconstraint}
\end{eqnarray*}
The interference-to-noise-ratio (INR) is defined through the parameter $\alpha$ as:
\begin{eqnarray}
\frac{\log(\mbox{INR})}{\log(\mbox{SNR})}&=&\alpha\\
\Rightarrow \mbox{INR}&=& \mbox{SNR}^\alpha
\end{eqnarray}
Thus, as in \cite{Etkin_Tse_Wang}, the channel is parameterized by $\alpha$. We define the generalized degrees of freedom \emph{per user} as:
\begin{eqnarray}
d(\alpha)&=&\frac{1}{K}\limsup_{\mbox{SNR}\rightarrow\infty}\frac{C_\Sigma(\mbox{SNR},\alpha)}{\frac{1}{2}\log(\mbox{SNR})}
\end{eqnarray}
where $C_\Sigma(\mbox{SNR},\alpha)$ is the \emph{sum}-capacity of the $K$ user interference network defined above. Note that we use $\limsup$ to ensure that $d(\alpha)$ is always defined, with the understanding that $\limsup$ and $\lim$ are the same if the ordinary limit exists. Finally, the half in the denominator is because we are dealing with real signals only.

\section{Generalized Degrees of Freedom (GDOF)}
The generalized degrees of freedom for the $K$ user interference channel described above, are presented in the following theorem.
\begin{theorem}\label{theorem:main}
\begin{eqnarray}
d(\alpha)\leq\left\{
\begin{array}{lrl}
1-\alpha,&0\leq\alpha\leq\frac{1}{2}&\mbox{(noisy interference)}\\
\alpha,&\frac{1}{2}\leq\alpha\leq\frac{2}{3}&\mbox{(weak interference)}\\
1-\frac{\alpha}{2},&\frac{2}{3}\leq\alpha< 1&\mbox{(moderately weak interference)}\\
\frac{1}{K},&\alpha=1&\\
\frac{\alpha}{2},&1<\alpha\leq 2&\mbox{(strong interference)}\\
1,&\alpha\geq 2&\mbox{(very strong interference)}
\end{array}
\right.
\end{eqnarray}
\end{theorem}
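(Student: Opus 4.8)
The plan is to establish the upper bound on $d(\alpha)$ region by region, by deriving genie-aided outer bounds on the sum-capacity $C_\Sigma$ and then taking the GDOF limit. The main tools will be Fano's inequality, the fact that Gaussian inputs maximize differential entropy under a covariance constraint, and the careful choice of side information handed to subsets of receivers so as to make the resulting bound tight. For the very strong interference regime $\alpha\geq 2$ the bound $d(\alpha)\leq 1$ is immediate, since even a single-user point-to-point link achieves $\frac12\log(1+\mathrm{SNR})$ and no interference channel can exceed $K$ times this per-user; for $\alpha=1$ the bound $d(1)\leq \frac1K$ should follow from the observation that, in the GDOF scaling, all signals and all interference arrive at the same power level, so the $K$ receivers see essentially the same observation and the sum-capacity collapses to that of a single point-to-point channel, i.e.\ $C_\Sigma \le \frac12\log(\mathrm{SNR})+o(\log\mathrm{SNR})$.

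The substantive work is the piecewise bound for $0\le\alpha<1$ and for $1<\alpha<2$. Here I would adapt the two-user genie arguments of \cite{Etkin_Tse_Wang} to the $K$-user symmetric setting. The idea is to give receiver $k$ as genie side information a noisy version of its own transmitted codeword (or of a carefully scaled combination of interfering codewords), decode $W^{[k]}$, subtract it, and then bound the remaining entropy terms; summing the resulting inequalities over a suitably chosen cyclic pattern of users and cancelling the common differential-entropy terms yields a single-letter bound of the form $C_\Sigma \le \frac{K}{2}\log(\mathrm{SNR}) - (\text{something growing like }\log(\mathrm{INR})\text{ or }\log(\mathrm{INR}^{1/2}\mathrm{SNR})) + O(1)$, which after dividing by $\frac{K}{2}\log\mathrm{SNR}$ and letting $\mathrm{SNR}\to\infty$ gives $1-\alpha$, $1-\alpha/2$, $\alpha/2$, etc.\ in the appropriate ranges. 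For the ``weak interference'' sub-range $\frac12\le\alpha\le\frac23$ where the bound is $\alpha$ (rather than $1-\alpha/2$, which is larger there only at $\alpha$ near $1$—in fact $\alpha\le 1-\alpha/2 \iff \alpha\le 2/3$, so the two agree at $\alpha=2/3$), I expect to need the sharper ``$Z$-channel''-type bound: reveal interference to only some receivers and use the worst-case noise / entropy-power type estimate, again summing cyclically.

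The key structural point that must be verified is that in the symmetric $K$-user channel the genie bounds do \emph{not} degrade with $K$ (except at the singular point $\alpha=1$): because every cross link has the same gain, the differential-entropy terms introduced for different users are identical in form and cancel cleanly when summed, leaving a per-user bound independent of $K$. I would organize the proof as a sequence of lemmas, one per regime, each supplying the corresponding genie and the chain of inequalities, and then assemble them. The hard part will be the middle regime $\frac23\le\alpha<1$ and the transition into $\alpha=1$: one must choose the genie so that the bound is $1-\alpha/2$ for $\alpha<1$ yet the argument degenerates exactly to the $\frac1K$ bound at $\alpha=1$, which requires tracking the $O(1)$ and $o(\log\mathrm{SNR})$ terms with enough care to see the discontinuity emerge, and confirming that no smarter genie can do better than $1-\alpha/2$ as $\alpha\uparrow 1$ while still collapsing at $\alpha=1$.
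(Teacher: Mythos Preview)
Your approach is sound in principle but takes a much harder path than the paper. The paper's outer-bound proof for $\alpha\neq 1$ is essentially a two-line reduction: eliminate all but two users (say users $1$ and $2$); the resulting channel is exactly the two-user symmetric interference channel whose GDOF is already characterized by Etkin--Tse--Wang. Since removing interferers cannot decrease the rates of the remaining pair, the ETW per-user GDOF curve is automatically an outer bound on $d(\alpha)$ for users $1$ and $2$, and by symmetry for every user. No new genie argument is needed; \cite{Etkin_Tse_Wang} is invoked as a black box.

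By contrast, you propose to re-derive the genie bounds directly in the $K$-user setting, summing the per-user inequalities cyclically and checking that the differential-entropy cross terms cancel by symmetry. This would work and would make the argument self-contained, but it is substantially more labor and carries real risk of error in the entropy bookkeeping---and it buys nothing, since the two-user reduction already gives a bound independent of $K$. Your concern about the $\alpha\uparrow 1$ transition and tracking $o(\log\mathrm{SNR})$ terms carefully enough ``to see the discontinuity emerge'' is a red herring: the paper handles $\alpha=1$ by a separate, direct argument (when $\mathrm{INR}=\mathrm{SNR}$ all received signals are statistically equivalent, so any receiver can decode all messages and the sum capacity equals the MAC capacity $\tfrac12\log(1+K\,\mathrm{SNR})$, giving $d(1)=1/K$ exactly), and for $\alpha\neq 1$ the reduction to the two-user case requires no continuity analysis whatsoever. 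The discontinuity is simply a consequence of two unrelated bounds, not something that has to be coaxed out of a single genie construction.
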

\vspace{0.3cm}
{\it Remark 1:} Note that, with the exception of $\alpha=1$ the generalized degrees of freedom per user do not depend on the number of users, so that the GDOF characterization is identical to that obtained in \cite{Etkin_Tse_Wang} for the $2$ user symmetric Gaussian interference channel. Also note that the GDOF characterization shows a singularity at $\alpha=1$ when the number of users $K>2$.
\begin{figure}
\centerline{\input{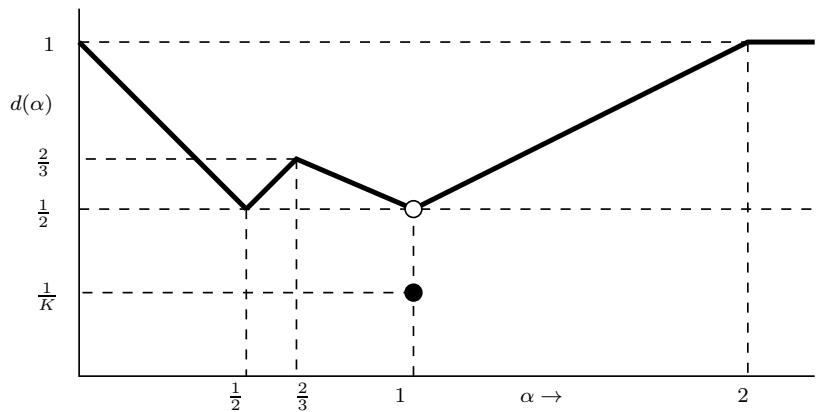}}
\caption{Generalized degrees of freedom for the $K$ user symmetric Gaussian interference channel}
\end{figure}
\subsection{Proof of Outerbound}
For the case $\alpha=1$, all receivers see statistically equivalent signals, which implies that all messages can be decoded by all receivers and the sum capacity is the multiple access capacity from the $K$ transmitters to, say, receiver 1. Mathematically, $C_\Sigma(\mbox{SNR},1)=\frac{1}{2}\log(1+K\mbox{SNR})$ which implies that $d(1)=\frac{1}{K}$. Note that achievability of $1/K$ degrees of freedom per user is trivial.

For $\alpha\neq 1$ the outerbound is straightforward as well. Eliminating all but two users, say users 1 and 2, the generalized degrees of freedom characterization for the remaining two users is given by the result of \cite{Etkin_Tse_Wang} which is the same as the outerbound above. Adding back other users does not help these two users' rates, so the result of \cite{Etkin_Tse_Wang} becomes an outerbound on the degrees of freedom achieved by users 1 and 2 in the $K$ user interference channel. The same argument applies for any subset of two users. Combining these outerbounds produces the outerbound of Theorem \ref{theorem:main}. Note that the outerbound is tight for $K=2$, as established in \cite{Etkin_Tse_Wang}.

Next we establish tight innerbounds for each regime. For this purpose we break the result of Theorem \ref{theorem:main} into several lemmas, each of which is covered in a separate section.
\section{Achievability of Generalized Degrees of Freedom}
\subsection{Preliminary Setup}
In this paper, all the achievable schemes for the generalized degrees of freedom are obtained by first finding the optimal scheme for the deterministic channel model and then translating it to the Gaussian channel, in a similar manner as \cite{Cadambe_Jafar_Shamai}. To avoid repetition, we state the common aspects of the achievable schemes before addressing the individual scenarios. 

For a given $\alpha\neq 1$, consider the sequence of SNR values (indexed by $M$) such that:
\begin{eqnarray}
\mbox{SNR}&=&Q^\frac{2M}{\left|\alpha-1\right|}\label{eq:SNRQ}
\end{eqnarray}
where $Q\gg K$ is a large but fixed positive integer value and $M$ is a positive integer whose value grows to infinity producing a sequence of SNR values that approach infinity as well. Using (\ref{eq:SNRQ}), we can rewrite the input-output equation (\ref{eq:inout}) as:
\begin{eqnarray*}
Y^{[k]}=X^{[k]}+Q^{\mbox{sgn}(\alpha-1)M}\sum_{j=1,j\neq k}^KX^{[j]}+Z^{[k]}
\end{eqnarray*}
where $\mbox{sgn}(x)=1$ if $x>0$ and $-1$ if $x<0$.

To compute lowerbounds on the generalized degrees of freedom we will express both the symmetric rate $R_{\mbox{sym}}$, i.e. a rate achieved simultaneously by each user, and the SNR as functions of $M, Q, \alpha$. We use $\log_Q(\cdot)$ for both rate and SNR. Thus,
\begin{eqnarray}
d(\alpha)&=&\frac{1}{K}\limsup_{\mbox{SNR}\rightarrow\infty}\frac{C_\Sigma(\mbox{SNR},\alpha)}{\frac{1}{2}\log(\mbox{SNR})}\geq\limsup_{\mbox{M}\rightarrow\infty}\frac{R_{\mbox{sym}}(M,Q,\alpha)}{M/|\alpha-1|}\label{eq:gdofM}
\end{eqnarray}
where we used (\ref{eq:SNRQ}) to make the substitution $\frac{1}{2}\log_Q(\mbox{SNR})=M/|\alpha-1|$.

We will occasionally represent \emph{positive} real signals in base-$Q$ notation using $Q-$ary digits $0,1,2,\cdots, Q-1$, which we denote as ``qits''. To avoid confusion we will mark the $Q$-ary representation as $[\cdot]_Q$. For example, when we write:
\begin{eqnarray*}
A=\left[\cdots A_3A_2A_1A_0.A_{-1}A_{-2}A_{-3}\cdots\right]_Q
\end{eqnarray*}
then it is implied that $A_i$  are integers with values between $0$ and $Q-1$, or the qits in the Q-ary representation of the real number $A$. Equivalently,
\begin{eqnarray*}
A=\sum_{i=-\infty}^{\infty} A_iQ^i, ~~A_i\in\{0,1,\cdots, Q-1\}
\end{eqnarray*}

\subsubsection{Transmit Scheme}

We impose some structure on the $Q-$ary representation of the transmit signal $X^{[k]}$, which is expressed as:
\begin{eqnarray}
X^{[k]}=\left[X^{[k]}_{N-1}\cdots X^{[k]}_3X^{[k]}_2X^{[k]}_1X^{[k]}_0.X^{[k]}_{-1}X^{[k]}_{-2}X^{[k]}_{-3}\cdots X^{[k]}_{-L}\right]_Q
\end{eqnarray}
i.e., the most significant (non-zero) qit corresponds to the coefficient of $Q^{N-1}$ and the least significant (non-zero) qit corresponds to the coefficient of $Q^{-L}$ for some $N,L$. The same structure applies to all users,  $k\in\mathcal{K}$. Additional structure may also be imposed on the transmitted signals for different interference regimes considered in subsequent sections. In each case, the values of the qits are restricted to ensure that (in the absence of noise) addition of interfering signals does not produce carry overs. For example this could be accomplished by the restriction that $X^{[k]}_i\in\{0,1,\cdots,\lfloor\frac{Q-1}{K}\rfloor\}$.

Note that $0\leq X^{[k]} \leq Q^N$. Thus, the transmitted power $\mbox{E}\left[(X^{[k]})^2\right]<Q^{2N}$. The power constraint (\ref{eq:powerconstraint}) is satisfied if $Q^{2N}\leq SNR$. This is guaranteed by the choice:
\begin{eqnarray}
N =\left\lfloor \frac{M}{|\alpha-1|}\right\rfloor \label{eq:chooseN}
\end{eqnarray}
Each qit is coded in time independently of other qits in the manner of multi-level coding. Thus, over $T$ channel uses, the codeword for qit $i$ from transmitter $k$ is $X^{[k]}_i(1), X^{[k]}_i(2),\cdots,X^{[k]}_i(T)$.

\subsubsection{Receive Scheme}
Each receiver $k\in\mathcal{K}$ takes the magnitude of the received signal, reduces it modulo $Q^{m}$, discards the value below the decimal point and expresses the result  in $Q-$ary representation as:
\begin{eqnarray}
\overline{Y}^{[k]}=\left\lfloor|Y^{[k]}|\mod Q^{m}\right\rfloor&=& \sum_{i=0}^{m-1} \overline{Y}^{[k]}_iQ^i, ~~\overline{Y}^{[k]}_i\in\{0,1,\cdots, Q-1\}
\end{eqnarray}
Here, $m=\max(N,N+\mbox{sgn}(\alpha-1)M)$ is the maximum number of qits seen at the receiver above the decimal place (noise floor) that are contributed by the transmitted signals.  The $k^{th}$ receiver views the $i^{th}$ qit as a separate channel with input $X^{[k]}_i$ and output $\overline{Y}^{[k]}_i$.

Let us also define a noise-free version of the received signal,
\begin{eqnarray*}
\hat{Y}^{[k]}=X^{[k]}+Q^{\mbox{sgn}(\alpha-1)M}\sum_{j=1,j\neq k}^KX^{[j]}
\end{eqnarray*}
which is essentially the received signal in the deterministic channel model of \cite{Avestimehr_Diggavi_Tse}. 

Let $P^e_i$ be the probability that the $i^{th}$ qit of $\hat{Y}^{[k]}$ is not the same as the $i^{th}$ qit of $\overline{Y}^{[k]}$, $0\leq i\leq m-1$. The key idea is that due to the structure imposed on the transmitted qits and the finite variance of noise, this probability $P^e_i$ is a motonically decreasing function of $i$ and approaches zero as $i$ becomes large. Thus, as $M, N$ and $i$ become large, the limiting rate achieved with noise-free channel output $\hat{Y}^{[k]}_i$ is identical to the rate achieved with output $\overline{Y}^{[k]}_i$ by coding across channel uses over each qit separately. We provide a detailed exposition of this argument only for the very strong interference case.  The argument applies to all achievable schemes in this paper in a straightforward fashion. For other applications of the same argument see \cite{Cadambe_Jafar_Shamai}. 

We start with the very strong interference regime.
\subsection{Very Strong Interference}\label{sec:verystrong}
The following theorem establishes the generalized degrees of freedom for the very strong interference case.
\begin{lemma}
For the $K$ user symmetric Gaussian channel defined in Section \ref{sec:systemmodel}, the generalized degrees of freedom $d(\alpha)=1$ for $\alpha\geq 2$. 
\end{lemma}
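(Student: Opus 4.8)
The bound $d(\alpha)\le 1$ for $\alpha\ge 2$ is part of the outerbound of Theorem~\ref{theorem:main} already established above (and is in any case immediate, since the per-user degrees of freedom cannot exceed those of an interference-free point-to-point Gaussian link). Hence the content of the lemma is the matching innerbound $d(\alpha)\ge 1$, which I would prove by specializing the deterministic-to-Gaussian construction of the Preliminary Setup. Fix $\alpha\ge 2$, so that $\mbox{sgn}(\alpha-1)=1$ and, by (\ref{eq:chooseN}), $N=\lfloor M/|\alpha-1|\rfloor\le M$. Take $L=0$ (no fractional qits) and place each transmit signal on levels $0,1,\ldots,N-1$ using the \emph{full} qit alphabet $X^{[k]}_i\in\{0,1,\ldots,Q-1\}$, each level coded independently across the $T$ channel uses in the manner of multilevel coding; the power constraint holds because $\mbox{E}[(X^{[k]})^2]<Q^{2N}\le\mbox{SNR}$.

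\textbf{Deterministic channel.} The key structural point is that in the noise-free model $\hat Y^{[k]}=X^{[k]}+Q^{M}\sum_{j\neq k}X^{[j]}$ the desired signal and the interference occupy disjoint bands of qits. Indeed $X^{[k]}$ is an integer in $[0,Q^{N})$, hence lives on levels $0,\ldots,N-1$, while $\sum_{j\neq k}X^{[j]}<(K-1)Q^{N}<Q^{N+1}$ since $Q\gg K$, so after the $M$-qit shift the interference — \emph{including} any carries produced when the $K-1$ interfering codewords are summed — lives on levels $M,\ldots,M+N$. Because $N\le M$, the highest desired level $N-1$ lies strictly below the lowest interference level $M$, so the two bands never overlap and no carry-avoidance restriction on the qit alphabet is needed. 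Therefore receiver $k$, after reducing $\bmod\ Q^{m}$ (with $m=N+M$) and taking the integer part, reads levels $0,\ldots,N-1$ of $\hat Y^{[k]}$ and recovers exactly the corresponding levels of $X^{[k]}$: it sees $N$ clean parallel noiseless channels, each carrying $\log Q$ per channel use, i.e.\ $N$ qits per channel use. By (\ref{eq:gdofM}) this already gives $d(\alpha)\ge\limsup_M N/(M/|\alpha-1|)=1$ for the deterministic channel, and — unlike the other regimes — no optimization over $Q$ is required.

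\textbf{Gaussian channel.} It remains to carry out, for this regime, the noise-floor argument outlined in the Preliminary Setup. Since $Q^{M}\sum_{j\neq k}X^{[j]}$ dominates $|Z^{[k]}|$ except on the vanishing-probability event that every interfering codeword symbol is zero, I may assume $\hat Y^{[k]}>0$, so $|Y^{[k]}|=\hat Y^{[k]}+Z^{[k]}$ and $\lfloor|Y^{[k]}|\rfloor=\hat Y^{[k]}+e$ with $e:=\lfloor Z^{[k]}\rfloor$ an integer of magnitude at most $|Z^{[k]}|+1$. Adding such an $e$ alters level $i$ only if $(\hat Y^{[k]}\bmod Q^{i})$ lands in a fixed interval of length $|e|$ (a carry when $e>0$, a borrow when $e<0$), so with i.i.d.\ uniform codebooks $P^e_i\le(\mbox{E}|Z^{[k]}|+1)/Q^{i}+\mbox{Pr}(|Z^{[k]}|\ge Q^{i}-1)\le 2/Q^{i}+1/(Q^{i}-1)^{2}$ by Chebyshev; hence $P^e_i$ is monotone decreasing in $i$ and tends to $0$. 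Discarding the lowest $i_0:=\lfloor\log M\rfloor$ levels (so $i_0\to\infty$ while $i_0/N\to 0$) and applying Fano's inequality with uniform inputs to each remaining qit channel gives an achievable rate of at least $\log Q-h(P^e_i)-P^e_i\log Q$ per level; summing over levels $i_0,\ldots,N-1$ and writing rates in $\log_Q$ units yields $R_{\mbox{sym}}(M,Q,\alpha)\ge(N-i_0)(1-\epsilon_{i_0})$ with $\epsilon_{i_0}\to 0$, and then (\ref{eq:gdofM}) gives $d(\alpha)\ge\limsup_M(N-i_0)(1-\epsilon_{i_0})/(M/|\alpha-1|)=1$, matching the outerbound. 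I expect the only real obstacle to be the bookkeeping in this last step — making the carry-propagation bound on $P^e_i$ rigorous, handling the absolute value and the $\bmod\ Q^m$ truncation of the top carry qit, and exhibiting each level as a memoryless channel so that a coding theorem applies — rather than any new idea; the disjoint-band observation above carries all the conceptual weight.
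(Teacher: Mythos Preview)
Your proof is correct and rests on the same structural insight as the paper's: since $N\le M$ when $\alpha\ge 2$, the desired signal (levels $0,\ldots,N-1$) and the $Q^M$-shifted interference (levels $M$ and above) occupy disjoint qit bands, so a simple modulo operation removes all interference in the deterministic model, and the noise argument then carries this over to the Gaussian channel. The difference is in how you handle the qit alphabet. The paper restricts each qit to $\{1,\ldots,Q-2\}$, which guarantees that a noise perturbation of magnitude below $Q^{i-1}$ cannot flip level $i$ regardless of the realization of the lower levels; this makes the bound $1-P^e_i\ge\Pr(|Z^{[k]}|\le Q^{i-1})$ immediate but costs a factor $\log_Q(Q-2)$ per qit, so the paper must invoke $Q\to\infty$ at the end. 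You instead keep the full alphabet $\{0,\ldots,Q-1\}$ and bound $P^e_i$ by averaging over the uniform distribution of the lower qits (the carry/borrow event has probability $|e|/Q^i$ conditional on $|e|<Q^i$); this yields $d(\alpha)\ge 1$ directly for any fixed $Q$, at the price of the extra bookkeeping you flag in your last paragraph (uniform-input averaging, the absolute value, the top-level truncation). Both routes are sound; yours is sharper in that it does not need the final $Q\to\infty$ step, while the paper's alphabet restriction buys a one-line noise bound and avoids any appeal to the codebook distribution.
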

\begin{proof}
The transmitted symbol of user $j$ is constructed as follows:
\begin{eqnarray}
X^{[j]}&=&\left[X^{[j]}_{N-1}X^{[j]}_{N-2}X^{[j]}_{N-3}\cdots X^{[j]}_1 X^{[j]}_0.0\right]_Q
\end{eqnarray}
The same construction is used for all transmitters. We restrict the values of the qits as:
\begin{eqnarray}
X^{[j]}_i\in\{1,2,\cdots,Q-2\}, \forall j\in\{1,2,\cdots,K\}, i\in\{0,1,\cdots,N-1\}
\end{eqnarray}
Note that the qit values $0$ and $Q-1$ are not used. While not necessary, this is done to simplify the argument that the impact of the noise is limited and it does not propagate indefinitely through carry overs into higher signal levels.
The value of $N$ is chosen to satisfy the power constraint
\begin{eqnarray}
N =\left\lfloor \frac{M}{|\alpha-1|}\right\rfloor 
\end{eqnarray}
as in (\ref{eq:chooseN}). Since $\alpha\geq 2$ note that this implies that $M\geq N$.

\begin{figure}
\centerline{\input{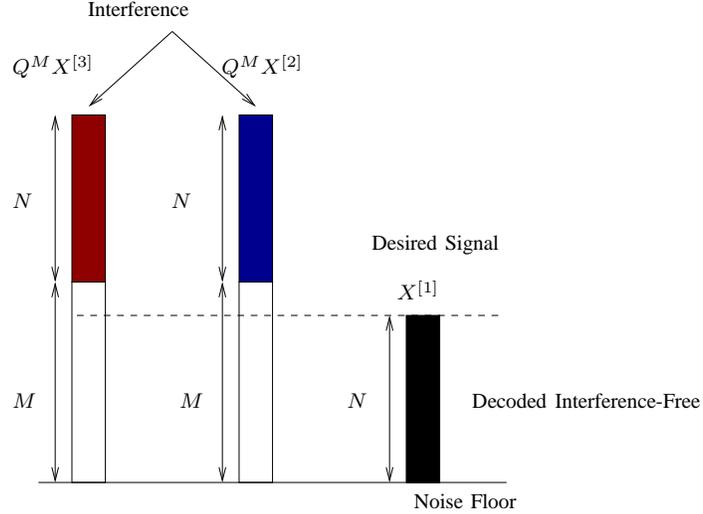}}
\caption{Signal Levels at Receiver 1 for Very Strong Interference Case}
\end{figure}

The key idea is that  multiplication by $Q^{M}$ shifts the decimal point in the Q-ary representation $(X^{[j]})_Q$ by $M$ places to the right. Thus, an interfering signal reaches a receiver shifted up by $M$ qits. Since $M\geq N$ all the interference is easily eliminated by a simple modulo $Q^N$ operation at the receiver. To account for noise, let us evaluate the probability $P^e_i$ for $0\leq i\leq N$. Note that:
\begin{eqnarray}
1-P^e_i=\mbox{Prob}(\hat{Y}^{[k]}_i=\overline{Y}^{[k]}_i)\geq \mbox{Prob}\left(\left|Z^{[k]}\right|\leq Q^{i-1}\right)
\end{eqnarray}
because regardless of whether it is positive or negative, any additive noise with magnitude no more than $Q^{i-1}$ does not affect the coefficient of $Q^i$. Thus, $P^e_i$ is monotonically decreasing in $i$ and goes to $0$ as $i$ becomes large. Now consider the channel with input $X^{[k]}_i$ and noise-free output $\hat{Y}^{[k]}_i$ as defined above. Because of the shift of the interfering qits out of the desired space we have:
\begin{eqnarray}
\hat{Y}^{[k]}_i=X^{[k]}_i, 0\leq i\leq N-1.
\end{eqnarray}
Thus, this channel has channel capacity $\hat{C}_i=\log_Q(Q-2)$ qits per channel use.

The noisy channel has input $X^{[k]}_i$ and output $\overline{Y}^{[k]}_i$. However because $P^e_i$ approaches zero as $i\rightarrow\infty$ the capacity of the noisy channel can be expressed as:
\begin{eqnarray}
\lim_{i\leq N-1, i\rightarrow\infty}\overline{C}_i &=&\hat{C}_i\\
\Rightarrow \sum_{i=0}^{N-1}\overline{C}_i&=& N\log_Q(Q-2)+o(N)
\end{eqnarray}
Thus, we express the symmetric achievable rate for the proposed scheme as:
\begin{eqnarray}
R_{\mbox{sym}}&=&\frac{M}{\alpha-1}\left(\log_Q(Q-2)\right)+o(M)\nonumber
\end{eqnarray}
Substituting into (\ref{eq:gdofM}) we have
\begin{eqnarray}
d(\alpha)&\geq& \limsup_{M\rightarrow\infty}\frac{\frac{M}{\alpha-1}\left(\log_Q(Q-2)\right)+o(M)}{M/(\alpha-1)}\\
&=& \log_Q(Q-2)
\end{eqnarray}
Observing that $Q$ can be chosen to be arbitrarily large and comparing with the outerbound  for $\alpha\geq 2$, we have 
\begin{eqnarray}
d(\alpha)&=& 1, ~~~\forall \alpha\geq 2.
\end{eqnarray}

\end{proof}
\subsection{Strong Interference}
The following theorem establishes the generalized degrees of freedom for the strong interference case.
\begin{lemma}
For the $K$ user symmetric Gaussian channel, $d(\alpha)=\alpha/2$ for $1< \alpha\leq 2$.
\end{lemma}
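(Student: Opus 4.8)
The outer bound $d(\alpha)\le\alpha/2$ for $1<\alpha\le 2$ has already been established, so it suffices to exhibit a scheme achieving $d(\alpha)\ge\alpha/2$. I would follow the same route as in the very strong interference case: design a scheme for the deterministic model, compute its noise-free symmetric rate, and then invoke the generic ``$P^e_i\to 0$ plus multilevel coding'' argument to transfer it to the Gaussian channel and match the outer bound.

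Set $\mbox{SNR}=Q^{2M/(\alpha-1)}$ as in (\ref{eq:SNRQ}), so each interfering signal reaches a receiver shifted \emph{up} by $M$ qits, and take $N=\lfloor M/(\alpha-1)\rfloor$ as in (\ref{eq:chooseN}); since $1<\alpha\le 2$ we have $N\ge M$. With the qit bound $X^{[k]}_i\in\{0,1,\cdots,\lfloor\frac{Q-1}{K}\rfloor\}$ there are no carry-overs, so at receiver $k$ the noise-free output is $\hat Y^{[k]}_i=X^{[k]}_i+W_{i-M}$ with $W:=\sum_{j\ne k}X^{[j]}$: the bottom $M$ levels carry $X^{[k]}$ interference-free, the middle $N-M$ levels carry the superposition $X^{[k]}_i+W_{i-M}$, and the top $M$ levels carry clean qits of $W$; altogether $\hat Y^{[k]}$ has $N+M$ qit positions. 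The scheme is for all users to employ the \emph{same} nested-lattice codebook $\mathcal C$ (equivalently, a $\mathbb{Z}$-linear code over the $N$ levels respecting the no-carry bounds) of rate $r=\lceil\tfrac12(N+M)\rceil=\tfrac{\alpha}{2}N+O(1)$ qits. Linearity makes the aggregate interference $W$ lie in a set of size $Q^{r+o(N)}$ --- the sum of lattice codewords, reduced modulo the coarse lattice, is again a codeword --- so that even though receiver $k$ cannot resolve the individual interfering codewords $X^{[j]}$, $j\ne k$, the pair $(X^{[k]},W)$ takes at most $Q^{2r+o(N)}=Q^{N+M+o(N)}$ values, matching the number of distinct values of $\hat Y^{[k]}$. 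Choosing $\mathcal C$ generically (a random linear code / a good nested lattice pair) makes the map $(X^{[k]},W)\mapsto\hat Y^{[k]}$ injective, so receiver $k$ recovers $X^{[k]}$; operationally, it decodes the modulo-reduced interference sum, subtracts it, and reduces modulo the coarse lattice to peel off $X^{[k]}$. This realizes the effective channel at each receiver as a two-user Gaussian MAC whose rates do not degrade with $K$.

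The resulting symmetric rate is $R_{\mbox{sym}}=r=\tfrac{\alpha}{2}N+o(N)=\tfrac{\alpha}{2}\cdot\tfrac{M}{\alpha-1}+o(M)$ qits, so by (\ref{eq:gdofM}) we get $d(\alpha)\ge\limsup_{M\to\infty}\frac{R_{\mbox{sym}}}{M/(\alpha-1)}=\tfrac{\alpha}{2}$, provided the noise argument goes through. For the latter I would reuse verbatim the argument given for the very strong interference case: each qit level is coded independently over time, $1-P^e_i\ge\mbox{Prob}(|Z^{[k]}|\le Q^{i-1})\to 1$ as $i\to\infty$, so the per-level noisy capacities converge to the noise-free ones and $\sum_i\overline C_i=r\log_Q(\cdot)+o(N)$, with $\log_Q(\cdot)\to 1$ as $Q\to\infty$. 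Combining with the outer bound $d(\alpha)\le\alpha/2$ yields $d(\alpha)=\alpha/2$ for $1<\alpha\le 2$.

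I expect the main obstacle to be the ``decode your message plus the aggregate interference'' step: one must show that the sum $W$ of $K-1$ non-decodable interfering codewords can be treated as a \emph{single} decodable codeword, so that the two-user-MAC rate split $2r\le N+M$ (equivalently $2R\le\tfrac12\log(1+\mbox{SNR}+(K-1)\mbox{INR})$ in the Gaussian channel, together with $R\le\tfrac12\log(1+\mbox{SNR})$ which is active precisely for $\alpha\le 2$) governs the rate and makes $d(\alpha)$ independent of $K$. Making this precise requires care with the nested-lattice structure under the no-carry-over constraint --- the sumset of the codebook is not literally the codebook, only of comparable (to leading order, identical) size --- and with the modulo-coarse-lattice reduction used both to keep the interference sum inside the codebook and to peel it off at the receiver; this is exactly the ``nested lattice'' ingredient advertised in the abstract, and it is what lets the sum of interfering signals be decoded even though the interfering messages are not.
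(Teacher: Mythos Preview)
Your approach is sound in outline but takes a different route from the paper. You invoke a generic $\mathbb{Z}$-linear (nested-lattice) code of rate $r\approx(N+M)/2$ shared by all users, so that the aggregate interference $W$ is itself essentially a codeword and the receiver jointly decodes the pair $(X^{[k]},W)$ as in a two-user MAC of sum-capacity $N+M$ qits; injectivity of $(X^{[k]},W)\mapsto\hat Y^{[k]}$ then comes from a random-code/counting argument. The paper instead gives a completely \emph{explicit} construction: the transmitted signal has $2N'-M$ qits (with $N'=\lfloor M\alpha/(2(\alpha-1))\rfloor$, not the $N$ of (\ref{eq:chooseN})), and the most significant $N'-M$ qits are hard-wired to be \emph{copies}, in reverse order, of the least significant $N'-M$ qits. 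Decoding is iterative onion-peeling in $M$-qit blocks: the bottom $M$ desired qits are interference-free and, once decoded, their copies near the top of the desired signal are subtracted; symmetrically the top $M$ interference-sum qits are read off clean and their copies near the bottom of the interference are subtracted; repeat until all $N'$ desired qits are recovered. No genericity or injectivity argument is needed. Your route makes the two-user-MAC rate constraint $2r\le N+M$ transparent and is closer in spirit to compute-and-forward; the paper's route buys a concrete scheme in which --- crucially --- the copy constraint is a \emph{per-symbol} constraint, so the free qit levels really are coded independently over time and the $P^e_i\to 0$ argument from the very-strong case applies verbatim. In your scheme the cross-level linear code entangles the qit levels, so the line ``each qit level is coded independently over time'' is in tension with the code structure; you would need to replace it by a block argument (e.g., pad $o(N)$ zero qits above the noise floor and bound the probability that noise corrupts any qit above the padding), which is doable but is not the literal very-strong-interference argument you propose to reuse.
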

\begin{proof}
The transmitted symbol of user $j$ is constructed as follows:
\begin{eqnarray}
X^{[j]}&=&\left[\underbrace{X^{[j]}_{2N-M-1}\cdots X^{[j]}_{N}}_{N-M \mbox{ ``copy'' qits}}\underbrace{X^{[j]}_{N-1}\cdots  X^{[j]}_0}_{N \mbox{ qits}}.0\right]_Q
\end{eqnarray}
The same construction is used for all transmitters. We restrict the values of the qits as:
\begin{eqnarray}
X^{[k]}_i\in\{1,2,\cdots,\lfloor\frac{Q-1}{K}\rfloor-1\}
\end{eqnarray}
This ensures that no carryovers are produced by the addition of the interfering signals in the absence of noise and that $P^e_i$ approaches zero as $i$ becomes large. Further, we require that:
\begin{eqnarray}
X^{[j]}_{2N-M-i}=X^{[j]}_{i-1}
\end{eqnarray} 
for $i\in\{1,2,\cdots,N-M\}$. Thus the most significant $N-M$ qits are simply copies of the least significant $N-M$ qits in opposite order.

With this construction the transmit power  can be expressed in the form of the following outerbound:
\begin{eqnarray}
\mbox{E}\left[(X^{[j]})^2\right]&\leq&\left(Q^{2N-M}\right)^2
\end{eqnarray}

The power constraint is satisfied if $2N-M\leq\frac{M}{\alpha-1}$. So we choose:
\begin{eqnarray}
 N&=&\left\lfloor\frac{M\alpha}{2(\alpha-1)}\right\rfloor\geq M
\end{eqnarray}
\begin{figure}
\centerline{\input{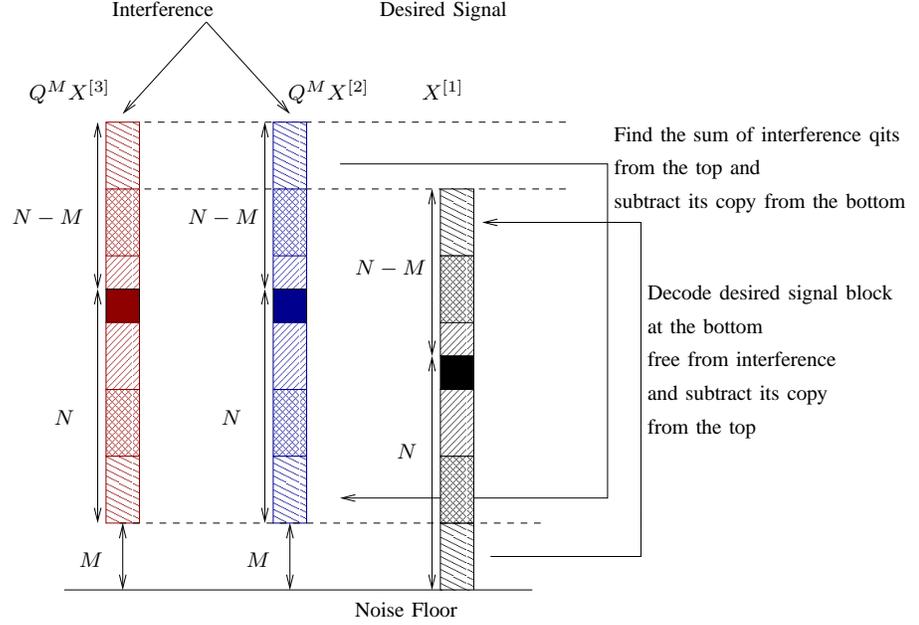}}
\caption{Signal Levels at Receiver 1 for Strong Interference Case}\label{fig:strong}
\end{figure}
The decoding takes place stepwise in blocks of size $M$ qits, as illustrated in Figure \ref{fig:strong}. In each step the least significant block of $M$ qits of desired signal is decoded free from interference and its copy present in the the most significant $M$-qit block of the desired signal, is subtracted out. At the same time, the most significant $M$-qit block of the \emph{sum} of the interferers' signals is used to cancel out the least significant $M$-qit block of the sum of interfering qits. Thus, each step produces a new $M$-qit block of desired signal free from interference and a new $M$-qit block of interfering signals free from the desired signal. The procedure is continued until all $N$ qits of desired signal are decoded.

Note that the interference qits are not decoded, but rather the sum of the interference codeword symbols and noise is subtracted from its copy. Thus, the cancellation of interference introduces noise in the desired signal. However, the probability $P^e_i$ once again approaches zero as the block size $M$ increases and the same arguments (as explained in the very strong interference case) apply for characterizing the rate of desired signal within $o(M)$ for each $M$-qit block. Using this achievable scheme, the symmetric rate achieved simultaneously by all users is:
\begin{eqnarray}
R_{\mbox{sym}}&=&\frac{M\alpha}{2(\alpha-1)}\left(\log_Q(\lfloor\frac{Q-1}{K}\rfloor -1\right)+o(M)\nonumber
\end{eqnarray}
Thus,
\begin{eqnarray}
d(\alpha)&\geq& \limsup_{M\rightarrow\infty}\frac{\frac{M\alpha}{2(\alpha-1)}\left(\log_Q(\lfloor\frac{Q-1}{K}\rfloor -1)\right)+o(M)}{M/(\alpha-1)}\\
&=& \frac{\alpha}{2}\log_Q(\lfloor\frac{Q-1}{K}\rfloor -1)
\end{eqnarray}
Observing that $Q$ can be chosen to be arbitrarily large while $K$ is fixed, and comparing with the outerbound, we have 
\begin{eqnarray}
d(\alpha)&= \alpha/2, ~~~\forall\alpha\in (1, 2].
\end{eqnarray}

\end{proof}
\subsection{Moderately Weak Interference}
The following theorem establishes the generalized degrees of freedom for the moderately weak interference case.
\begin{lemma}
For the $K$ user symmetric Gaussian channel, $d(\alpha)=1-\frac{\alpha}{2}$ for $\frac{2}{3}\leq \alpha< 1$. 
\end{lemma}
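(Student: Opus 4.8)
The plan is to follow the deterministic-to-Gaussian template of this section: build a structured transmit signal for the underlying deterministic channel, carry its rate over to the Gaussian channel via the $P^e_i\to0$ argument already detailed for the very strong interference case, and finally let $Q\to\infty$.

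For $\alpha\in[\frac{2}{3},1)$ we have $\mbox{sgn}(\alpha-1)=-1$, so with $\mbox{SNR}=Q^{2M/(1-\alpha)}$ each interferer reaches a receiver scaled by $Q^{-M}$, i.e. shifted \emph{down} by $M$ qits, while the desired signal fills $N_{\max}=\lfloor M/(1-\alpha)\rfloor$ qit-levels, which also meets the power constraint $Q^{2N_{\max}}\le\mbox{SNR}$. Because $\alpha\ge\frac{2}{3}$ we have $N_{\max}\ge3M$; the aggregate interference seen at a receiver occupies only its bottom $N_{\max}-M\approx\alpha N_{\max}$ levels, so the top $M\approx(1-\alpha)N_{\max}$ levels of the desired signal arrive interference-free. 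As in the strong case, restricting each qit to $\{1,\dots,\lfloor\frac{Q-1}{K}\rfloor-1\}$ keeps the superposition of the $K-1$ interferers carry-free and stops below-noise carries from propagating.

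I would give each user a Han--Kobayashi-type split: a \emph{private} message in the bottom $M$ levels of $X^{[j]}$ (these stay below the noise floor at every other receiver and hence cause no interference) and a \emph{common} message of rate $\frac{\alpha}{2}N_{\max}$ qits in the $\approx\alpha N_{\max}$ levels above, drawn from a nested-lattice codebook. The key point --- and the reason the answer will not depend on $K$ --- is that the $K-1$ interfering common codewords are \emph{aligned} via the same coarser lattice, so at each receiver they collapse into a single effective interferer $V$ of rate $\frac{\alpha}{2}N_{\max}$ rather than $(K-1)\frac{\alpha}{2}N_{\max}$, and no individual interfering message is ever decoded. Receiver $k$ jointly decodes its own private, its own common, and $V$; with this split every deterministic MAC constraint holds --- the total constraint $(1-\alpha)N_{\max}+2\cdot\frac{\alpha}{2}N_{\max}=N_{\max}$ is tight, and the binding pairwise constraint forces $R_V\le(2\alpha-1)N_{\max}$, which admits $\frac{\alpha}{2}N_{\max}$ precisely because $\alpha\ge\frac{2}{3}$ --- so all three are recovered, and subtracting $V$ and the common cleans the bottom $M$ levels to yield the private block. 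Handling noise exactly as in the very strong interference case ($1-P^e_i\ge\mbox{Prob}(|Z^{[k]}|\le Q^{i-1})$, monotone in $i$), coding over $T$ channel uses attains within $o(M)$ the deterministic symmetric rate $R_{\mbox{sym}}=(1-\frac{\alpha}{2})N_{\max}\log_Q(\lfloor\frac{Q-1}{K}\rfloor-1)$. Substituting into (\ref{eq:gdofM}), using $N_{\max}\approx M/(1-\alpha)$ and letting $Q\to\infty$ with $K$ fixed, and combining with the outerbound of Theorem~\ref{theorem:main}, gives $d(\alpha)=1-\frac{\alpha}{2}$ for $\frac{2}{3}\le\alpha<1$.

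The main obstacle is the nested-lattice alignment in the Gaussian translation: one must prove rigorously that the $K-1$ interferers collapse into a single decodable aggregate of the \emph{same} rate (so the argument is genuinely $K$-independent) and that the joint decoding still succeeds at the MAC corner point where several constraints are tight. Checking the degenerate endpoint $\alpha=\frac{2}{3}$ (where the common--interference overlap region vanishes and the scheme reduces to private plus clean common), the limit $\alpha\to1^-$ (where $N_{\max}/M\to\infty$), and absorbing the various floors into $o(M)$ are the routine remainder.
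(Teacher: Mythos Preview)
Your approach is sound in outline but takes a genuinely different route from the paper. The paper does not use a Han--Kobayashi split with joint MAC decoding of an aligned aggregate; instead it builds an explicit $Q$-ary signal with five blocks $[\,N\mid M\mid N\ \mbox{copies}\mid M\ \mbox{zeros}\mid M\,]_Q$, where $N=\lfloor M(3\alpha-2)/(2(1-\alpha))\rfloor$ so that the total span $2N+3M$ equals your $N_{\max}$, and where the top $N$ qits are reversed copies of the middle $N$-block. Decoding is then purely \emph{sequential}: the bottom $M$ interferer qits fall below the noise floor; the bottom $M$ desired qits land on the interferers' zero padding and are read clean; thereafter one alternates between reading the current top desired block (which sits above all remaining interference) and subtracting its reversed copy below, and reading the current bottom interference-sum block (which sits on desired zeros) and subtracting its reversed copy above. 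The total information recovered is $2M+N=(1-\alpha/2)N_{\max}$ qits, matching your count. This sidesteps precisely the obstacle you flag: no nested-lattice decodability theorem and no MAC corner-point argument are needed, because every block is already interference-free (or desired-signal-free) at the moment it is processed; the only ``alignment'' invoked is the carry-free addition guaranteed by the alphabet restriction. Your route is the more conceptual one --- essentially the $K$-user extension of Etkin--Tse--Wang rate splitting fused with Bresler--Parekh--Tse-style lattice alignment of the aggregate interferer --- and it makes the role of the threshold $\alpha\ge\frac{2}{3}$ transparent through the binding constraint $R_p+R_V\le\alpha N_{\max}$; the paper's construction is more elementary and fully self-contained, with $\alpha\ge\frac{2}{3}$ entering only through the requirement $N\ge0$.
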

\begin{proof}
The transmitted symbol of user $j$ is constructed as follows:
\begin{eqnarray}
X^{[j]}&=&\left[\underbrace{X^{[j]}_{2N+3M-1}\cdots X^{[j]}_{N+3M}}_{N \mbox{ qits}}\underbrace{X^{[j]}_{3M+N-1}\cdots  X^{[j]}_{2M+N}}_{M \mbox{ qits}}\underbrace{X^{[j]}_{2M+N-1}\cdots X^{[j]}_{2M}}_{N \mbox{ ``copy'' qits}}\underbrace{0\cdots 0}_{M \mbox{ zeros}}\underbrace{X^{[j]}_{M-1}\cdots  X^{[j]}_0}_{M \mbox{ qits}}.0\right]_Q\nonumber
\end{eqnarray}
The same construction is used for all transmitters. All transmitted qits lie between $1$ and $\lfloor\frac{Q-1}{K}\rfloor -1$. Further, we require that
\begin{eqnarray}
X^{[j]}_{2N+3M-i}=X^{[j]}_{2M+i-1}
\end{eqnarray} 
for $i\in\{1,2,\cdots,N\}$. Thus, the most signficant $N$ qits are copied in reverse order.

With this construction the transmit power  can be expressed in the form of the following outerbound:
\begin{eqnarray}
\mbox{E}\left[(X^{[j]})^2\right]&\leq&\left(Q^{2N+3M}\right)^2
\end{eqnarray}

The power constraint is satisfied if $2N+3M\leq\frac{M}{1-\alpha}$, so we choose:
\begin{eqnarray}
 N&=&\left\lfloor\frac{M(3\alpha-2)}{2(1-\alpha)}\right\rfloor
\end{eqnarray}

The key idea for the achievable scheme is illustrated in Figure \ref{fig:modweak}. Decoding takes place in blocks of $M$ qits. Starting from the noise floor, the $M$ least significant qits of the interference appear below the noise floor and are discarded. The $M$ least significant qits of the desired signal coincide with the zero padding qits of the interferers so they can be decoded without interference as well. From this point on the decoding proceeds in similar steps. Each step, the most significant $M$ qits of the desired signal do not see any interference and are decoded interference-free. These qits are subtracted from the copy that appears at the least significant end. Similarly, the sum of the $M$ least significant qits of the interference is computed free from desired signal and this sum is subtracted from the most significant interference qits to produce an interference free desired signal block. The step is repeated until all the desired qits are decoded.
\begin{center}
\begin{figure}
\centerline{\input{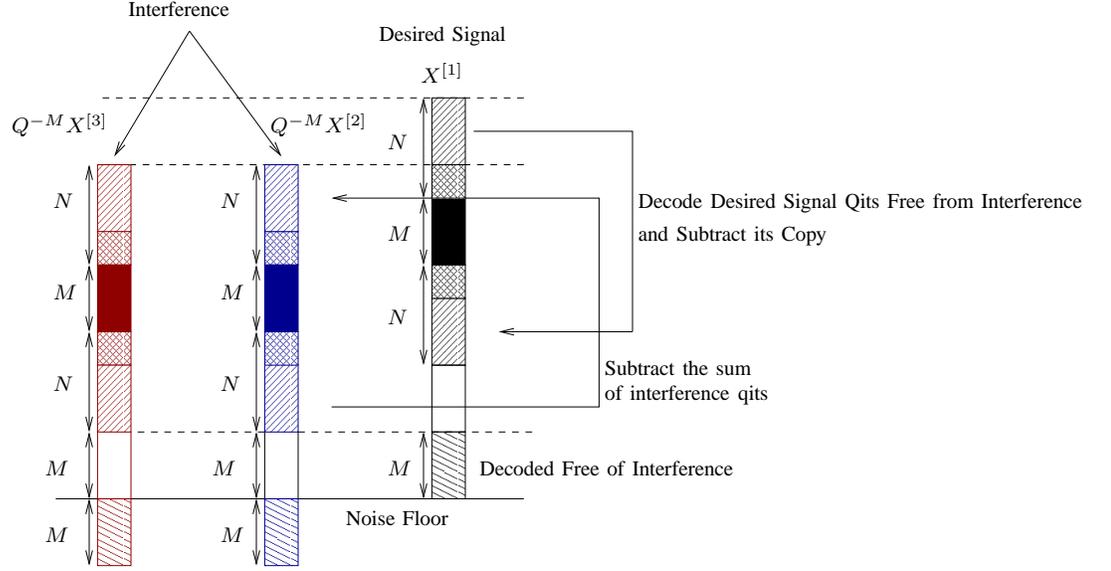}}
\caption{Signal Levels at Receiver 1 for Moderately Weak Interference Case}\label{fig:modweak}
\end{figure}
\end{center}

The symmetric rate achieved with this scheme is:
\begin{eqnarray}
R_{\mbox{sym}}&=&(2M+N)\log_Q\left(\lfloor\frac{Q-1}{K}\rfloor -1\right)+o(M)\nonumber\\
&=&\left(2M+\frac{M(3\alpha-2)}{2(1-\alpha)}\right)\log_Q\left(\lfloor\frac{Q-1}{K}\rfloor -1\right)+o(M)\nonumber\\
&=&\frac{M(1-\alpha/2)}{(1-\alpha)}\log_Q\left(\lfloor\frac{Q-1}{K}\rfloor -1\right)+o(M)\nonumber
\end{eqnarray}
Thus,
\begin{eqnarray}
d(\alpha)&\geq& \limsup_{M\rightarrow\infty}\frac{\frac{M(1-\alpha/2)}{(1-\alpha)}\log_Q\left(\lfloor\frac{Q-1}{K}\rfloor -1\right)+o(M)}{M/(\alpha-1)}\\
&=& \left(1-\frac{\alpha}{2}\right)\log_Q\left(\lfloor\frac{Q-1}{K}\rfloor -1\right)
\end{eqnarray}
Observing that $Q$ can be chosen to be arbitrarily large while $K$ is fixed, and comparing with the outerbound, we have 
\begin{eqnarray}
d(\alpha)&= 1- \alpha/2, ~~~\forall\alpha\in [\frac{2}{3}, 1).
\end{eqnarray}
\end{proof}

\subsection{Weak Interference}
The following theorem establishes the generalized degrees of freedom for the weak interference case.
\begin{lemma}
For the $K$ user symmetric Gaussian channel, $d(\alpha)=\alpha$ for $\frac{1}{2}\leq \alpha\leq \frac{2}{3}$. 
\end{lemma}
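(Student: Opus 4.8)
The outerbound $d(\alpha)\le\alpha$ on $\tfrac12\le\alpha\le\tfrac23$ is already in hand from the proof of Theorem~\ref{theorem:main}, so only a matching innerbound has to be supplied, and the plan is to build it from the deterministic channel model exactly as in the preceding cases. Since $\alpha<1$ we have $\mbox{sgn}(\alpha-1)=-1$, so each interfering signal reaches a receiver multiplied by $Q^{-M}$, i.e.\ shifted \emph{down} by $M$ qits. I would have every transmitter send
\[
X^{[j]}=\Big[\,\underbrace{X^{[j]}_{N+2M-1}\cdots X^{[j]}_{N+M}}_{M\text{ qits}}\ \underbrace{0\cdots0}_{M\text{ zeros}}\ \underbrace{X^{[j]}_{N-1}\cdots X^{[j]}_{0}}_{N\text{ qits}}\,.\,0\,\Big]_Q,\qquad N=\left\lfloor\frac{M(2\alpha-1)}{1-\alpha}\right\rfloor,
\]
with every qit restricted to $\{1,2,\dots,\lfloor\tfrac{Q-1}{K}\rfloor-1\}$, so that the qit-wise sum of the $K$ contributions at any level never overflows; this rules out carry-overs and makes $P^e_i\to0$ exactly as in the very strong interference case. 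Throughout $[\tfrac12,\tfrac23]$ one has $0\le N\le M$ (with $N=0$ at $\alpha=\tfrac12$ and $N=M$ at $\alpha=\tfrac23$), and the occupied span is $N+2M\le M/(1-\alpha)$, so the power constraint $Q^{2(N+2M)}\le\mbox{SNR}$ is met by this choice of $N$, in the same spirit as (\ref{eq:chooseN}).

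The core of the argument is the level bookkeeping at receiver~$1$, which observes $X^{[1]}$ superimposed on $Q^{-M}\sum_{j\neq1}X^{[j]}$. Because all $K-1$ interferers use the identical structure over identical links, their contributions align perfectly, and after reducing $|Y^{[1]}|$ modulo $Q^{N+2M}$ and discarding the part below the decimal point, the aggregate interference is supported exactly on levels $N,\dots,N+M-1$ (the interferers' top block, shifted down by $M$); it is identically zero on levels $0,\dots,N-1$ (the interferers' zero block lands there after the shift), and it is zero on levels $N+M,\dots,N+2M-1$ (which lie above the entire support of the interferers). Hence the $N$ least significant qits and the $M$ most significant qits of the desired signal are received free of interference, while the middle $M$-qit block of $X^{[1]}$ is itself zero and merely overlaps pure interference that is discarded. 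Note that, in contrast to the strong and moderately weak cases, here no ``copy'' qits and no subtraction of an interference sum are needed: in this regime the decodable part of the desired signal and the interference can simply be kept orthogonal in the signal-level domain.

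It then only remains to repeat the noise argument of the very strong interference case for each of the $N+M$ interference-free qit channels, whose noiseless outputs equal their inputs and which therefore each have capacity $\log_Q(\lfloor\tfrac{Q-1}{K}\rfloor-1)$ qits per channel use; this gives
\[
R_{\mbox{sym}}=(N+M)\log_Q\!\Big(\Big\lfloor\tfrac{Q-1}{K}\Big\rfloor-1\Big)+o(M)=\frac{M\alpha}{1-\alpha}\,\log_Q\!\Big(\Big\lfloor\tfrac{Q-1}{K}\Big\rfloor-1\Big)+o(M).
\]
Substituting into (\ref{eq:gdofM}) yields $d(\alpha)\ge\alpha\,\log_Q(\lfloor\tfrac{Q-1}{K}\rfloor-1)$, and letting $Q\to\infty$ with $K$ fixed gives $d(\alpha)\ge\alpha$; together with the outerbound this establishes $d(\alpha)=\alpha$ on $[\tfrac12,\tfrac23]$. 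I expect the only delicate point to be the level accounting itself --- in particular verifying that $N\le M$, so that the interferers' zero block really does blanket every level that maps onto the bottom $N$ desired qits, and that $N+2M\le M/(1-\alpha)$, so that the power constraint holds --- together with a sanity check of the two endpoints: $\alpha=\tfrac12$, where $N=0$ and the scheme collapses to a single $M$-qit block above an $M$-qit zero pad, and $\alpha=\tfrac23$, where $N=M$ and the construction reduces to the same $[\,M\mbox{ qits}\mid M\mbox{ zeros}\mid M\mbox{ qits}\,]$ pattern as the moderately weak scheme at $\alpha=\tfrac23$.
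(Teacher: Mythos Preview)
Your proposal is correct and follows essentially the same approach as the paper: a three–block signal structure with an $M$-qit zero pad in the middle so that, after the $Q^{-M}$ shift, the desired data qits and the aligned interference occupy disjoint levels, yielding $N+M$ clean qit channels and hence $d(\alpha)=\alpha$. The only cosmetic difference is that the paper places $N$ qits on top and $M$ qits on the bottom (so the interferers' bottom $M$ block drops below the noise floor and their zero pad covers the desired bottom $M$ qits), whereas you swap the block sizes; your more conservative alphabet $\{1,\dots,\lfloor (Q-1)/K\rfloor-1\}$ (the paper uses $\{1,\dots,Q-2\}$ here) cleanly prevents any carry from the summed interference block into the adjacent desired block, which is a nice touch.
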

\begin{proof}
The transmitted symbol of user $j$ is constructed as follows:
\begin{eqnarray}
X^{[j]}&=&\left[\underbrace{X^{[j]}_{2M+N-1}\cdots X^{[j]}_{2M}}_{N \mbox{ qits}}\underbrace{0\cdots 0}_{M \mbox{ zeros}}\underbrace{X^{[j]}_{M-1}\cdots  X^{[j]}_0}_{M \mbox{ qits}}.0\right]_Q\nonumber
\end{eqnarray}
The same construction is used for all transmitters. All transmitted qits lie between $1$ and $Q-2$.

With this construction the transmit power  can be expressed in the form of the following outerbound:
\begin{eqnarray}
\mbox{E}\left[(X^{[j]})^2\right]&\leq&\left(Q^{N+2M}\right)^2
\end{eqnarray}

The power constraint is satisfied if we choose $ N+2M\leq\frac{M}{(1-\alpha)}$, so we choose
\begin{eqnarray}
N&=&\left\lfloor \frac{M(2\alpha-1)}{1-\alpha}\right\rfloor\leq M
\end{eqnarray}

\begin{figure}
\centerline{\input{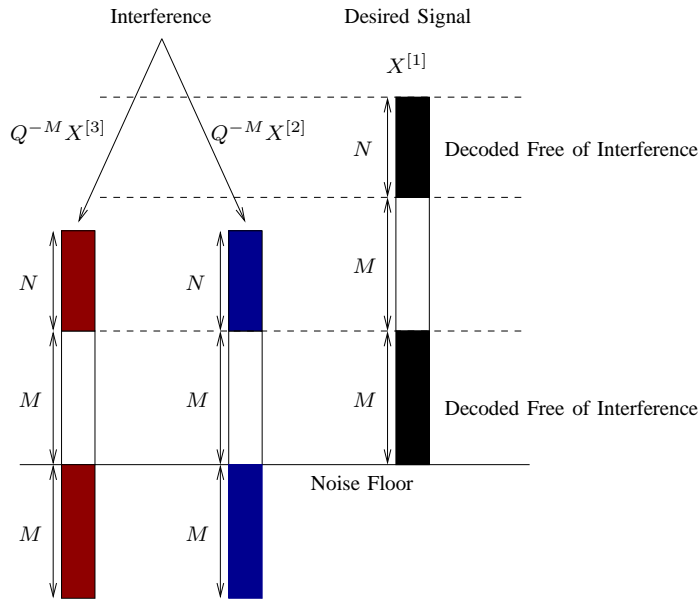}}
\caption{Signal Levels at Receiver 1 for Weak Interference Case}\label{fig:weak}
\end{figure}

The scheme is explained Fig. \ref{fig:weak}. The channel pushes the interference level down by $M$ qits. Thus, the least significant $M$ qits of the desired signal align with the zero padding qits of interference and can be decoded free from interference. Since, $N\leq M$, the most significant $N$ qits are also free from interference.
Thus, the symmetric rate achieved simultaneously by all users is expressed as:
\begin{eqnarray}
R_{\mbox{sym}}&=&\left(\frac{M(2\alpha-1)}{(1-\alpha)}+M\right)\log_Q(Q-2)+o(M)\nonumber
\end{eqnarray}
Thus,
\begin{eqnarray}
d(\alpha)&\geq& \limsup_{M\rightarrow\infty}\frac{\frac{M\alpha}{1-\alpha}\log_Q(Q-2)+o(M)}{M/(\alpha-1)}\\
&=& \alpha\log_Q\left(Q-2\right)
\end{eqnarray}
Observing that $Q$ can be chosen to be arbitrarily large while $K$ is fixed, and comparing with the outerbound, we have 
\begin{eqnarray}
d(\alpha)&=& \alpha, ~~~\forall\alpha\in [\frac{1}{2}, \frac{2}{3}].
\end{eqnarray}
\end{proof}

\subsection{Noisy Interference}
The following theorem establishes the generalized degrees of freedom for the noisy interference case ($0\leq\alpha\leq 1/2$).
\begin{lemma}
For the $K$ user symmetric Gaussian channel, $d(\alpha)=1-\alpha$ for $0\leq \alpha\leq \frac{1}{2}$. An optimal scheme in the degrees of freedom sense is to use Gaussian codebooks and treat interference as noise.
\end{lemma}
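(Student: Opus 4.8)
The outerbound $d(\alpha)\le 1-\alpha$ for $0\le\alpha\le\frac12$ was already obtained in the proof of the outerbound of Theorem~\ref{theorem:main} (it is the two-user bound of \cite{Etkin_Tse_Wang} applied to any pair of users), so all that remains is to show $d(\alpha)\ge 1-\alpha$ in this regime. The plan is to let every transmitter use an independent i.i.d.\ Gaussian codebook with each symbol drawn from $\mathcal{N}(0,\mbox{SNR})$, which meets the power constraint with equality, and to have every receiver decode only its own message while treating the aggregate of the $K-1$ cross-links together with the AWGN as \emph{noise}.

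The first step is to identify the effective noise seen at receiver $k$. Because the $X^{[j]}$, $j\neq k$, are independent and Gaussian, the term $\sqrt{\mbox{INR}/\mbox{SNR}}\sum_{j\neq k}X^{[j]}+Z^{[k]}$ is Gaussian with variance $1+(K-1)\mbox{INR}$. A standard point-to-point random-coding argument then shows that each user can simultaneously achieve the symmetric rate
\begin{eqnarray*}
R_{\mbox{sym}}&=&\frac12\log\left(1+\frac{\mbox{SNR}}{1+(K-1)\mbox{INR}}\right)=\frac12\log\left(1+\frac{\mbox{SNR}}{1+(K-1)\mbox{SNR}^\alpha}\right).
\end{eqnarray*}
Since all $K$ users achieve this rate at once, $C_\Sigma(\mbox{SNR},\alpha)\ge K R_{\mbox{sym}}$, and therefore $d(\alpha)\ge\limsup_{\mbox{SNR}\rightarrow\infty}R_{\mbox{sym}}/(\tfrac12\log\mbox{SNR})$.

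The second step is to evaluate this limit. For any $\alpha<1$, the argument of the logarithm grows like $\tfrac{1}{K-1}\mbox{SNR}^{1-\alpha}$, so $R_{\mbox{sym}}=\tfrac12(1-\alpha)\log\mbox{SNR}-\tfrac12\log(K-1)+o(\log\mbox{SNR})$ and the ratio tends to $1-\alpha$. Hence $d(\alpha)\ge 1-\alpha$, which together with the outerbound gives $d(\alpha)=1-\alpha$ for $0\le\alpha\le\frac12$. (The same computation shows that treating interference as noise always achieves $1-\alpha$ for $\alpha<1$; this meets the outerbound exactly when $\alpha\le\frac12$, since for $\alpha\ge\frac12$ the outerbound is the strictly larger value $\alpha$, so a different scheme is needed there.)

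There is no real obstacle here; the only points that need care are the bookkeeping of the effective noise variance $1+(K-1)\mbox{INR}$ — which relies on the independence of the interfering codebooks and on the $\sqrt{\mbox{INR}/\mbox{SNR}}$ scaling of the cross links — and the invocation of the achievability of the Gaussian mutual-information expression above. Using i.i.d.\ Gaussian codewords makes the aggregate interference exactly Gaussian, so the latter is just the capacity of a scalar AWGN channel; alternatively one could appeal to the robustness of the Gaussian rate formula to white non-Gaussian additive interference. Either way the GDOF conclusion, being a first-order ($\log\mbox{SNR}$-scale) statement, is unaffected.
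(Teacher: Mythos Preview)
Your proof is correct and follows essentially the same approach as the paper: both invoke the two-user outerbound for $d(\alpha)\le 1-\alpha$, use i.i.d.\ Gaussian codebooks with interference treated as noise to obtain $R_{\mbox{sym}}=\tfrac12\log\!\left(1+\tfrac{\mbox{SNR}}{1+(K-1)\mbox{SNR}^\alpha}\right)$, and expand this as $\tfrac12(1-\alpha)\log\mbox{SNR}+o(\log\mbox{SNR})$ to conclude $d(\alpha)\ge 1-\alpha$. Your version is slightly more detailed in justifying the effective-noise variance and adds the (correct but not strictly needed) parenthetical remark about why this scheme is insufficient for $\alpha>\tfrac12$.
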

\begin{proof}
With Gaussian codebooks and with interference treated as noise, the symmetric achievable rate is:
\begin{eqnarray}
R_{\mbox{sym}}&=&\frac{1}{2}\log\left(1+\frac{\mbox{SNR}}{1+(K-1)\mbox{INR}}\right)\nonumber\\
&=&\frac{1}{2}\log\left(1+\frac{\mbox{SNR}}{1+(K-1)\mbox{SNR}^\alpha}\right)\nonumber\\
&=&\frac{1}{2}(1-\alpha)\log\left(\mbox{SNR}\right)+o(\log(\mbox{SNR}))\nonumber\\
\end{eqnarray}
and therefore as $\mbox{SNR}\rightarrow\infty$, the innerbound on degrees of freedom $d(\alpha)\geq 1-\alpha$ which coincides with the outerbound for $\alpha\leq 1/2$.
\end{proof}
\section{Conclusion}
We found the generalized degrees of freedom per user for the $K$ user symmetric Gaussian interference channel to be independent of the number of users with the exception of a singularity at $\alpha=1$. The symmetric channel model itself imposes a special structure on the channel coefficients which plays an important role in the achievable schemes. This work reaffirms the importance of the deterministic channel model of \cite{Avestimehr_Diggavi_Tse} as well as the importance of structured coding for interference networks with more than $2$ users. For future work, many important directions remain open. For example, extensions to asymmetric channels, complex signals and channel coefficients, time varying and frequency selective channels with random coefficients, and multiple antennas are all interesting directions. Another promising direction is to use the generalized degrees of freedom as a guide to pursue capacity characterizations within a constant number of bits in the manner of \cite{Etkin_Tse_Wang} for more than $2$ users. 
\bibliographystyle{ieeetr}
\bibliography{Thesis}
\end{document}